\newcommand{\bea}{\begin{eqnarray}}
\newcommand{\eea}{\end{eqnarray}}
\def\bi{\begin{itemize}}
\def\ei{\end{itemize}}
\def\bc{\begin{center}}
\def\ec{\end{center}}
\def\C{\hbox{$\mit I$\kern-.7em$\mit C$}}
\def\R{\hbox{$\mit I$\kern-.6em$\mit R$}}
\def\ket#1{|#1\rangle}
\newcommand{\one}{\mbox{$1 \hspace{-1.0mm}  {\bf l}$}}
\def\tr{\mathrm{tr}}
\def\ket#1{\left| #1\right>}
\def\bra#1{\left< #1\right|}
\newcommand{\proj}[1]{\ket{#1}\bra{#1}}
\newtheorem{theorem}{Theorem}
\newtheorem{lemma}[theorem]{Lemma}
\begin{document}

\title{Multipartite Entanglement and Global Information}

\author{C. Kruszynska and B. Kraus}

\affiliation{Institute for Theoretical Physics, University of
Innsbruck, Austria}

\begin{abstract}
We investigate the entanglement properties of pure quantum states
describing $n$ qubits. We characterize all multipartite states
which can be maximally entangled to local auxiliary systems using
controlled operations. A state has this property iff one can
construct out of it an orthonormal basis by applying independent
local unitary operations. This implies that those states can be
used to encode locally the maximum amount of $n$ bits. Examples of
these states are the so--called stabilizer states, which are used
for quantum error correction and one--way quantum computing. We
give a simple characterization of these states and construct a
complete set of commuting unitary observables which characterize
the state uniquely. Furthermore we show how these states can be
prepared and discuss their applications.
\end{abstract}
\maketitle

One of the challenges in quantum information theory is to get a
better understanding of multipartite entanglement. Since bipartite
entanglement measures are not sufficient to classify multipartite
entangled states, several other measures of entanglement, like the
tangle \cite{CoKu00} or the localizable entanglement \cite{VePoCi04}
and generalizations of it have been introduced to study the "true"
multipartite entanglement \cite{MiVe04}. Furthermore, different classes of
entangled states have been identified \cite{DuViCi00,VeDeMoVe02} and a normal
form of multipartite states has been presented \cite{Ves}. Several
important applications of multipartite entangled states, like
quantum error correction, quantum computing, but also applications
within condensed matter physics have been found (for recent
reviews see \cite{HoHo07,AmFaOsVe08,WoCi} and reference therein). Despite
all these results, the properties of multipartite entangled states
are far from being completely understood.

Here we use a different approach to gain a new insight into the
entanglement properties of multipartite states. The idea is to
determine how well the qubits can be locally entangled to auxiliary
systems. Before we discuss the operational meaning of this let us
precisely state the situation we investigate here. We consider an
$n$ qubit quantum state $\ket{\Psi}$. Each party uses an auxiliary
qubit to entangle it to its system qubit in such a way that the
global state is a maximally entangled state between the system and
the auxiliary qubits \footnote{We allow only one qubit per site,
because, if we would consider a $4$--level system per side, each
party could implement a completely depolarizing map leading to a
maximally entangled state between the auxiliary systems and the system
qubits.}. The operations which are used by the parties are
so--called controlled operations, which we denote by $C_l$, with $
C_l=\sum_i U_l^i \otimes \ket{i}_{l_a}\bra{i},$ where $U_l^i$ are
unitary operations acting on system $l$ and $\ket{i}_{l_a}\bra{i}$
is acting on the auxiliary system attached to $l$. If there exist
local control gates $C_l$ such that the state $C_1 \otimes C_2
\otimes \ldots \otimes C_n \ket{\Psi}\ket{+}^{\otimes n}$, with
$\ket{+}=1/\sqrt{2}(\ket{0}+\ket{1})$ is maximally entangled
between the system and the auxiliary systems, we call the state
$\ket{\Psi}$ locally maximally entanglable, LME.

The motivations for investigating this class of states are the
following: (1) The characterization of LME states (LMEs) gives a
classification of multipartite entangled states according to their
entanglement properties. To be more precise, let us assume that
$\ket{\Psi}$ is LME. After successfully attaching the auxiliary
qubits locally, all the quantum information contained in
$\ket{\Psi}$ is washed out, since the reduced state describing the
system qubits is maximally mixed. Thus, by local operations it is
possible to wash out all the global correlations of the state. Even
though the local correlations can always be washed out in this way
\footnote{ That is for any single qubit state $\rho$ there exists a
unitary operator $U$, such that $\rho+U\rho U^\dagger =\one$.},
there exist states, e.g. the $\ket{W}$ state \cite{DuViCi00} for
which it is not possible to wash out the global correlations in
this way. Therefore, these states are fundamentally different from
LMEs. (2) We will show that all LMEs can be used for maximal
(local) encoding of classical bits. Let $\ket{\Psi}$ be an
$n$--qubit LME state. Then, each party can locally encode a single
bit value by applying certain unitary operations to the qubit at
his disposal. We will show that the states obtained in this way are
all orthogonal. Thus, they form a maximal set ($2^n$) of globally
perfectly distinguishable states. However, no party can gain
locally any information about the bits owned by the other parties.
(3) LMEs can be used for gate teleportation \cite{Ni01}, i.e.
certain non--local operations can be implemented on an arbitrary
state using LMEs. (4) Many applications of multipartite entangled
states, like quantum error correction \cite{Gothesis97} or the
one--way quantum computer \cite{RaBr01} employ so--called
stabilizer states \cite{Gothesis97} which are LME. Also the
purification schemes studied in \cite{Ben,DuAsBr03,KrMiBrDu06}
purify to LMEs. Therefore, looking at multipartite entanglement
from this different point of view might allow us to generalize
these applications and to find new ones.

The outline of the paper is the following. First we introduce a
standard form of multipartite entangled states. Then we show that a
state $\ket{\Psi}$ is LME iff there exist local unitary operators,
$U_l^{(i_l)}$ such that the set $\{U_1^{(i_1)}\otimes \cdots
\otimes U_n^{(i_n)} \ket{\Psi}\}$ forms an orthonormal basis
(ON--basis), for $i_l\in \{0,1\}$. That is, a state is LME iff it
can be extended to an ON--basis by independent local unitary
operations. Using these results we derive a simple characterization
of all LMEs. In fact, we will show that a state is LME iff it is
local unitary equivalent (LU--equivalent) to a state
$\ket{\Psi}=\sqrt{\frac{1}{2^{n}}}\sum_{i_1,\ldots, i_n=0}^1 e^{i
\alpha_{i_1,\ldots, i_n}}\ket{i_1,\ldots, i_n},$ where
$\alpha_{i_1,\ldots, i_n}\in \R$. The entanglement contained in
this state is completely determined by the classical phases
$\alpha({\bf i})\equiv \alpha_{i_1,\ldots, i_n}$ and their
correlations. We show that all LMEs can be prepared using
generalized phase gates, where the number of qubits on which the
phase gates are acting on depends on the correlations of the phases
$\alpha({\bf i})$. Next, we consider the unitary operators which
correspond, via the Jamio\l kowski isomorphism
\cite{CiDu01,DuCi01}, to the LMEs and show how non--local
operations can be implemented with LMEs. Furthermore, for any LME
state we construct a complete set of commuting unitary observables
(the generalized stabilizer) which uniquely defines the state. This
cannot only be used to construct Hamiltonians for which
$\ket{\Psi}$ is the unique ground state \cite{WoCi}, but also to
design dissipative processes for which $\ket{\Psi}$ is the unique
stationary state \cite{VeWoCi08,KrBu08}. We show that, for
instance, the $3$ qubit $W$ state is not LME, implying that this
state is fundamentally different than, for instance, a GHZ--state.
Furthermore, we show that it is possible to entangle two qubits
locally such that the third party is unable to entangle his system
locally, even if we allow for an arbitrary two--qubit gate. Thus
here, it is impossible to wash out the global correlations using
local unitary operations.

Let us start by introducing our notation. By $X,Y,Z$ we denote the
Pauli operators. The subscript of an operator will always denote the
system it is acting on, or the system it is describing. For
instance $\rho_i$ is the single qubit reduced state of system $i$
of a state $\ket{\Psi}$, i.e. $\rho_i=\tr_{\text{all but
}i}(\proj{\Psi})$ and $\langle W_i\rangle=\tr(\proj{\Psi}W_i)$
denotes the expectation value of the operator $W_i$ acting on
system $i$. $W^i$ denotes the $i$--th power of the operators $W$
with $W^0\equiv \one$ for any operator, $W$. Since we will often
consider local operators and product states we will denote by ${\bf
i}$ the classical bit--string $(i_1,\ldots, i_n)$ with
$i_k\in\{0,1\}$ $\forall k\in \{1,\ldots, n\}$, e.g. $\ket{{\bf
0}}=\ket{0,\ldots, 0}$. We say that a state, $\ket{\Psi}$ is
LU--equivalent to $\ket{\Phi}$ ($\ket{\Psi}\simeq_{LU} \ket{\Phi}$)
if there exist local unitary operators, $U_1,\ldots, U_n$, such
that $\ket{\Psi}=U_1\otimes \cdots \otimes U_n \ket{\Phi}$.

In order to investigate the LMEs we introduce the trace
decomposition of multipartite states. Let $\ket{\Psi}$ be an $n$
qubit state with single qubit reduced states $\{\rho_i\}$. We write
each single qubit reduced state $\rho_i$ in its spectral
decomposition, $\rho_i=U_i^\dagger D_i U_i$, with
$D_i=\mbox{diag}(\lambda^i_1,\lambda^i_2)$, where $\lambda^i_k$ are
the Schmidt coefficients of the bipartite splitting qubit $i$ and
the rest. We call any such decomposition, $U_1 \otimes \cdots
\otimes U_n\ket{\Psi}$, trace decomposition of $\ket{\Psi}$. The
trace--decomposition has the property that the reduced states are
all diagonal in the computational basis, i.e. $\langle
X_i\rangle=\langle Y_i\rangle=0$. In this paper we will only make
use of the trace decomposition. However, it should be noted that
this decomposition can be used to define a unique standard form of
multipartite states \cite{KrKr08b}. For $D_i\not \propto \one$
$\forall i$ the trace decomposition can be easily made unique, by
requiring that $\lambda^i_1\geq \lambda^i_2$, and imposing certain
conditions on the phases of the coefficients of the states in the
computational basis. If $\rho_i=\frac{1}{2}\one$, for some system
$i$, the standard form can be defined as $\lim_{\epsilon\rightarrow
0}\ket{\Psi(\epsilon)}$, where $\ket{\Psi(\epsilon)}$ denotes the
unique standard form of
$\sqrt{1-\epsilon}\ket{\Psi}+\sqrt{\epsilon} \ket{0}^n$
\footnote{Note that there always exists an $\epsilon_0>0$ such that
for all $0<\epsilon<\epsilon_0$ non of the reduced states is
proportional to the identity. Since $\ket{\Psi(\epsilon)}$ is a
continuous function of $\epsilon$ in this region, the limit
exists.}. Any state can be transformed by local unitary operations
into its standard form \cite{KrKr08b}. Thus, it is easy to
verify that if the standard forms of two states are equivalent,
then the states are LU--equivalent. Note that this standard form
coincides for the simplest case of two qubits with the Schmidt
decomposition \cite{NiCh00} and can be generalized to $d$--level
systems.

Let us now characterize the LMEs. We show that a state is LME iff
it is extendable by independent local unitary operations to an
ON--basis.

\begin{lemma} An $n$--qubit state $\ket{\Psi}$ is LME iff there exists for each
party $l$ a unitary operation $U_l$ such that the set \bea
\{U_1^{i_1}\otimes \ldots, \otimes U_n^{i_n}
\ket{\Psi}\}_{i_l=0,1},\eea forms an ON--basis.
\end{lemma}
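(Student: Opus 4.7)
The plan is to expand the action of the controlled gates on $\ket{\Psi}\ket{+}^{\otimes n}$ and recognise that the maximal entanglement condition is equivalent to an orthonormality condition on the locally rotated copies of $\ket{\Psi}$; the rest is bookkeeping to turn arbitrary local unitaries into powers of a single $U_l$ per site.

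First I would write $C_l=\sum_{i_l=0,1}V_l^{i_l}\otimes\ket{i_l}_{l_a}\bra{i_l}$ (with $V_l^{i_l}$ a priori arbitrary unitaries) and expand $\ket{+}^{\otimes n}=2^{-n/2}\sum_{\bf i}\ket{\bf i}$, obtaining
\begin{eqnarray}
C_1\otimes\cdots\otimes C_n\ket{\Psi}\ket{+}^{\otimes n}=\frac{1}{\sqrt{2^n}}\sum_{\bf i}V_1^{i_1}\otimes\cdots\otimes V_n^{i_n}\ket{\Psi}\otimes\ket{\bf i}.
\end{eqnarray}
Both the system and the auxiliary register have dimension $2^n$, so the pure state above is maximally entangled across the system/auxiliary cut iff the reduced state on the auxiliary is $\frac{1}{2^n}\one$. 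Computing this reduced state, it is precisely the normalised Gram matrix of the $2^n$ vectors $V_1^{i_1}\otimes\cdots\otimes V_n^{i_n}\ket{\Psi}$, so maximal entanglement holds iff these vectors form an ON--basis of the system Hilbert space.

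The second step is to replace the $2n$ arbitrary unitaries $V_l^0,V_l^1$ by one $U_l$ per site, used in powers $0$ and $1$. Factor $V_l^{i_l}=V_l^0\widetilde V_l^{i_l}$ with $\widetilde V_l^0=\one$ and $\widetilde V_l^1=(V_l^0)^\dagger V_l^1$. Since $\bigotimes_l V_l^0$ is unitary, multiplying all $2^n$ vectors by it preserves the Gram matrix, so the orthonormality condition is equivalent to orthonormality of $\{\widetilde V_1^{i_1}\otimes\cdots\otimes\widetilde V_n^{i_n}\ket{\Psi}\}_{{\bf i}}$. Setting $U_l\equiv\widetilde V_l^1$ and invoking the convention $U_l^0\equiv\one$ stated in the paper, this is exactly the ON--basis claimed in the lemma. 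The converse is a direct substitution: given $U_l$ with the stated property, choose $C_l=\one\otimes\ket{0}_{l_a}\bra{0}+U_l\otimes\ket{1}_{l_a}\bra{1}$ and the computation above runs in reverse to produce a maximally entangled state.

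There is no genuinely hard step; the only point I would check carefully is the implication "pure state with maximally mixed marginal $\Rightarrow$ maximally entangled" across the system/auxiliary cut, which uses that the two registers have matching dimension $2^n$. This dimension match is exactly what the paper's footnote enforces by restricting to one auxiliary qubit per site, so nothing outside the given setup is needed.
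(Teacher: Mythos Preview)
Your proof is correct and follows essentially the same route as the paper's: both expand the controlled gates, factor out $V_l^{0}$ to define $U_l=(V_l^{0})^\dagger V_l^{1}$, and reduce maximal entanglement to orthonormality of the $2^n$ locally rotated copies of $\ket{\Psi}$, with the converse given by the same explicit choice of $C_l$. The only cosmetic difference is that the paper traces out the auxiliary and argues that a sum of $2^n$ rank--one projectors equals $\one$ only for an ON--basis, whereas you trace out the system and read the auxiliary marginal directly as the Gram matrix; your formulation is slightly slicker but the content is identical.
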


\begin{proof} Only if: If $\ket{\Psi}$ is LME then there
exist control operations $C_l=\sum_i V_l^{(i)} \otimes
\ket{i}_{l_a}\bra{i}$ such that $\ket{\Phi}=C_1 \otimes C_2 \otimes
\ldots \otimes C_n \ket{\Psi}\ket{+}^{\otimes n}$ is maximally
entangled in the splitting system versus auxiliary systems.
Applying $(V_l^{(0)}) ^\dagger$ to each system $l$ does not change
the entanglement properties and therefore $\rho_{1,\ldots,
n}=\frac{1}{2^n} {\cal E}_1\circ \ldots \circ{\cal E}_n
(\proj{\Phi})=\frac{1}{2^n}\one, $ where ${\cal E}_l(\rho)=\rho+
U_l\rho U_l^\dagger$, with $U_l=(V_l^{(0)}) ^\dagger V_l^{(1)}$. Since ${\cal E}_1\circ \ldots \circ{\cal
E}_n (\proj{\Phi})=\sum_{{\bf i}} \proj{\Psi_{\bf i}}$, with
$\ket{\Psi_{\bf i}}= U_1^{i_1}\otimes \ldots \otimes U_n^{i_n}
\ket{\Psi}$ is a sum of $2^n$ projectors, this can only be
fulfilled if $\{\ket{\Psi_{\bf i}}\}$ is an ON--basis
\footnote{This can only be fulfilled by orthogonal states since the
range of $\sum_{i=1}^{dim({\cal H})} \proj{\Psi_i}$ spans the whole
Hilbert space, ${\cal H}$, iff $\{\ket{\Psi_i}\}$ is linearly
independent. Then, $(\sum_{i=1}^{dim({\cal H})} \proj{\Psi_i})
\ket{\Psi_j}=\ket{\Psi_j}$ iff all states are orthogonal.}. To see
the inverse, one only has to define $C_l= \one \otimes \proj{0}+
U_l\otimes \proj{1}$.
\end{proof}

Note that the proof implies that if $\ket{\Psi}$ is LME then there
exist local unitary operations, $U_i$ such that $\rho_i+ U_i \rho_i
U_i^\dagger=\one$. That is, the local operations which wash out the
global correlations must also wash out the local correlations. We
are going to show now that these unitary operations are of a
special form. Note that $\{U_1^{i_1}\otimes \ldots \otimes
U_n^{i_n} \ket{\Psi}\}_{i_l=0,1}$ is an ON-basis iff $\{(V_1U_1
V_1^\dagger)^{i_1}\otimes \ldots \otimes (V_n U_n
V_n^\dagger)^{i_n} V_1\otimes \cdots V_n\ket{\Psi}\}_{i_l=0,1}$ is
an ON--basis, implying that a state is LME iff any LU--equivalent
state is LME. Therefore, we can restrict ourselves to some trace
decompositions of the state $\ket{\Psi}$, which we denote by
$\ket{\Psi_t}$. For $\rho_i \not\propto \one $ and $\rho_i$
diagonal the necessary condition, $\rho_i+ U_i \rho_i
U_i^\dagger=\one$, can only be fulfilled by $U_i=e^{i\alpha_i/2
Z_i}X_i e^{-i\alpha_i/2Z_i}$. For $\rho_i \propto \one$, we also
find that $U_i=V_i X_i V_i^\dagger$ for some unitary $V_i$ (up to a
global phase) since $\tr(U_i)=0$ follows for the fact that
$U_i\ket{\Psi}$ must be orthogonal to $\ket{\Psi}$ (Lemma $1$).
Thus, we only have to consider $X$ operations which implies that a
state $\ket{\Psi}$ is LME iff $\ket{\Psi}\simeq_{LU} \ket{\Phi}$,
where $\{X^{i_1}\otimes \ldots \otimes X^{i_n} \ket{\Phi}\},$ is an
ON-basis, i.e $\bra{\Phi} X^{i_1}\otimes \ldots \otimes
X^{i_n}\ket{\Phi}=0\;\forall\;{\bf i}\neq {\bf 0}.$ Using all that
it is now easy to show the following theorem.

\begin{theorem} A state $\ket{\Phi}$ is LME iff $\ket{\Phi}$ is
LU--equivalent to a state $\ket{\Psi}$ with \bea \label{LME}
\ket{\Psi}=\sqrt{\frac{1}{2^{n}}}\sum_{{\bf i}}^1 e^{i \alpha_{{\bf
i}}}\ket{{\bf i}}\equiv U^\Psi_{ph}\ket{+}^{\otimes n},\eea where
$\alpha_{\bf{i}}\in \R$ and $U^\Psi_{ph}$ denotes the
diagonal unitary operator with the entries $e^{i\alpha_{{\bf i}}}$
\footnote{Note that these states can be easily transformed to a
trace decomposition by applying the local unitary operations $HU_i$
with $H$ the Hadamard transformation and $U_i$ such that
$U_i\ket{0}=e^{i x_i}\ket{0}$ and $U_i\ket{1}=\ket{1}$, where
$\cot(x_i)= \frac{\langle X_i\rangle}{\langle Y_i\rangle}$.}
.\end{theorem}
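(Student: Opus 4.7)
The plan is to prove both directions by reducing to the $X$--orthogonality characterization derived in the paragraph immediately preceding the theorem: after passing to a trace decomposition, $\ket{\Phi}$ is LME iff it is LU--equivalent to a state $\ket{\Phi'}$ with $\bra{\Phi'} X^{i_1}\otimes\cdots\otimes X^{i_n}\ket{\Phi'}=0$ for every ${\bf i}\neq {\bf 0}$. Both directions will hinge on the fact that $H^{\otimes n}$ (the $n$--fold Hadamard) is a local unitary which interchanges the roles of $X$ and $Z$ in the computational basis.

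For the ``if'' direction, I start from $\ket{\Psi}=U^\Psi_{ph}\ket{+}^{\otimes n}$ and verify Lemma~1 with the choice $U_l=Z_l$. Because $U^\Psi_{ph}$ and each $Z_l$ are diagonal in the computational basis they commute, so $Z^{i_1}\otimes\cdots\otimes Z^{i_n}\ket{\Psi}=U^\Psi_{ph}\bigl(Z^{i_1}\ket{+}\otimes\cdots\otimes Z^{i_n}\ket{+}\bigr)$. The $2^n$ states $\ket{\pm\cdots\pm}$ form an ON--basis and $U^\Psi_{ph}$ preserves orthonormality, so $\{Z^{i_1}\otimes\cdots\otimes Z^{i_n}\ket{\Psi}\}_{\bf i}$ is an ON--basis. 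Lemma~1 then certifies that $\ket{\Psi}$, and hence any state LU--equivalent to it, is LME.

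For the ``only if'' direction I expand $\ket{\Phi'}=\sum_{\bf j} c_{\bf j}\ket{\bf j}$, so the assumed condition becomes the autocorrelation identity $\sum_{\bf j} c_{\bf j}^*\,c_{{\bf j}\oplus{\bf i}}=\delta_{{\bf i},{\bf 0}}$. I then act with $H^{\otimes n}$ and inspect the new computational--basis amplitudes $\tilde c_{\bf k}=\frac{1}{\sqrt{2^n}}\sum_{\bf j}(-1)^{{\bf j}\cdot{\bf k}}c_{\bf j}$. Expanding $|\tilde c_{\bf k}|^2$ and substituting ${\bf i}={\bf j}\oplus{\bf j'}$ yields $|\tilde c_{\bf k}|^2=\frac{1}{2^n}\sum_{\bf i}(-1)^{{\bf i}\cdot{\bf k}}\sum_{\bf j} c_{\bf j}^*\,c_{{\bf j}\oplus{\bf i}}=\frac{1}{2^n}$ for every ${\bf k}$. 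Hence $H^{\otimes n}\ket{\Phi'}$ has all computational--basis amplitudes of equal modulus $1/\sqrt{2^n}$, i.e.\ it equals $U^\Psi_{ph}\ket{+}^{\otimes n}$ for suitable real phases $\alpha_{\bf k}$. Inverting $H^{\otimes n}$ gives $\ket{\Phi'}\simeq_{LU}\ket{\Psi}$, and chaining with the initial $\ket{\Phi}\simeq_{LU}\ket{\Phi'}$ concludes.

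The main obstacle is the ``only if'' step: translating an orthogonality condition on $\{X^{i_1}\otimes\cdots\otimes X^{i_n}\ket{\Phi'}\}$ into the conclusion that all computational--basis amplitudes have equal modulus. The key insight is to recognize that this orthogonality is a Walsh--Hadamard Parseval identity in disguise, so the correct move is to transform by $H^{\otimes n}$; once this is spotted the equal--modulus statement is a one--line calculation and the LU--equivalence to the claimed normal form becomes explicit.
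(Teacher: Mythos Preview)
Your proof is correct and follows essentially the same route as the paper: both invoke the $X$--orthogonality characterization established just before the theorem, pass via the local Hadamard $H^{\otimes n}$ to the equivalent $Z$--condition, and deduce that every computational--basis amplitude has modulus $2^{-n/2}$. The only cosmetic difference is that the paper extracts $|\lambda_{\bf i}|^2=2^{-n}$ from the operator identity $\proj{\bf i}=2^{-n}\bigotimes_k(\one+(-1)^{i_k}Z_k)$ and the vanishing of all nontrivial $Z$--string expectation values, whereas you reach the same conclusion through the explicit Walsh--Hadamard autocorrelation computation on the coefficients.
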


\begin{proof} As we have seen before, $\ket{\Phi}$ is LME iff $\ket{\Phi}\simeq_{LU}
\ket{\Psi}$ with $\bra{\Psi}X^{i_1}\otimes \ldots \otimes X^{i_n}
\ket{\Psi}=0$ $\forall {\bf i}\neq {\bf 0}$ or, equivalently,
$\ket{\Phi}\simeq_{LU} \ket{\Psi}$ with $\bra{\Psi}Z^{i_1}\otimes
\ldots \otimes Z^{i_n} \ket{\Psi}=0$ $\forall {\bf i}\neq {\bf 0}$.
We write $\ket{\Psi}$ in the computational basis, $\ket{\Psi}=
\sum_{{\bf i}} \lambda_{{\bf i}}\ket{{\bf i}}$ and use that
$\proj{i_k}=1/2 (\one+(-1)^{i_k}Z)_k$. Then we have $|\lambda_{{\bf
i}}|^2=\langle \proj{{\bf i}}\rangle= 2^{-n} \langle
(\one+(-1)^{i_1}Z)_1\otimes \cdots \otimes
(\one+(-1)^{i_n}Z)_n\rangle.$ Since all expectation values of the
operators where at least one $Z$ operator occurs vanish we have $
|\lambda_{{\bf i}}|^2=2^{-n}.$
 \end{proof}

Thus, a state is LME iff there exists a product basis such that all
the coefficients of the state in this basis are phases. The control
gates to create the maximally entangled state between the system
(described by $\ket{\Psi}$ in Eq. (2)) and auxiliary qubits are the
two--qubit $\pi$--phase gates, $C=\proj{0}\otimes \one +
\proj{1}\otimes Z$. Note that, given an $n$ qubit LME state (Eq.
(2)) one can construct an $n+1$ qubit LME state by entangling an
additional qubit via $C$ to some system $j$. The phases would
change to $\alpha_{i_1,\ldots, i_{n+1}}=\alpha_{i_1,\ldots,
i_{n}}+\pi i_j i_{n+1}$. In this way one can attach arbitrarily
many qubits.

Since there are $2^n$ real parameters many multipartite states have the property of being LME. For instance any two--qubit
state is LME. This can be easily verified using the Schmidt
decomposition (standard form) of the state,
$\ket{\Psi}=\alpha\ket{00}+\sqrt{1-\alpha^2}\ket{11},$ with $\alpha
\in \R, \alpha \geq 0$ and choosing $U_1=X$ and $U_2=Y$. Prominent
examples of LMEs are the stabilizer states (which are LU--equivalent to the graph states) and the weighted graph states
\cite{Gothesis97, Hein}. There the phases $\alpha_{{\bf i}}$ are
quadratic functions of the index ${\bf i}=(i_1,\ldots, i_n)$, i.e.
$\alpha_{{\bf i}}=\pi {\bf i}^T \Gamma {\bf i},$ where the $n\times
n$ matrix $\Gamma$ is the so--called adjacency matrix of the
mathematical graph corresponding to the graph state \cite{Hein}.
Note that any product state is LME, however, it is very simple to
distinguish product states from entangled states using this notion.
If $\ket{\Psi}$ is a product state then the state $C_1\otimes
\ldots\otimes C_n\ket{\Psi}\ket{+}^n$ is maximally entangled
between the system and the auxiliary systems iff each party creates
a maximally entangled state (locally). Thus, considering the
difference between the local entanglement (each qubit with its
auxiliary system) and the global entanglement allows us to
distinguish product states from entangled states. Similar arguments
can be used to distinguish biseparable states from truly
multipartite entangled states \cite{KrKr08b}. In the following we
consider the general LME state $\ket{\Psi}$ given in Eq.
(\ref{LME}) and denote by $\ket{\Psi_{{\bf i}}}\equiv
\ket{\Psi_{i_1,\ldots, i_n}}=Z^{i_1}\otimes \cdots \otimes
Z^{i_n}\ket{\Psi}$ the elements of the ON-basis ($ \ket{\Psi}\equiv
\ket{\Psi_{{\bf 0}}}$) \footnote{For any LU--equivalent state the
same results apply.}.

Let us now discuss some applications of LMEs. An LME state can be
used to encode classical information locally. If $n$ parties share
the LME state $\ket{\Psi}$ (Eq. (\ref{LME})), each party can encode
a single bit value by applying either $\one$ (corresponding to the
bit value $0$), or $Z$ (corresponding to the bit value $1$), to the
qubit at his possession. The $2^n$ states obtained in this way are
globally perfectly distinguishable (since they are all orthogonal
due to Lemma $1$), but locally, no information can be gained. Note
that for instance for the $\ket{W}$ state, which is not LME, as we
shall see below, it is possible to find local unitary operations
$V_i\otimes W_i\otimes U_i$ such that $\{V_i\otimes W_i\otimes U_i
\ket{W}\}$ is an ON--basis \cite{AkBr05}. However, in this case the
unitary operators which generate the ON--basis depend on each other
which prevents us from using the state to encode locally $n$
independent classical bits. Apart from that, LMEs can also be used
to implement certain non--local unitary operations. In order to see
that, we use the Jamio\l kowski isomorphism which is a one--to--one
mapping between quantum states and quantum operations
\cite{CiDu01,DuCi01}. For an LME state $\ket{\Psi}$, the operation
which corresponds to the state $C_1\otimes \ldots\otimes
C_n\ket{\Psi}\ket{+}^{n}$, where $C$ is the two--qubit $\pi$--phase
gate, is unitary and has the form $U_{\Psi}=\sum_{{\bf
i}}\ket{\Psi_{{\bf i}}}\bra{{\bf i}}=U^\Psi_{ph}H^{\otimes n},$
where $H$ is the Hadamard transformation. This implies that having
an LME state $\ket{\Psi}$ at ones disposal, one can implement (up
to local Pauli operators) the unitary operation $U_{\Psi}$ on an
arbitrary state using only local operations
\cite{CiDu01,DuCi01}. Note that $\ket{\Psi}$ can also be employed
to implement certain transformations on a state describing less
than $n$ qubits. For instance, the one--way quantum computer
proposed in \cite{Rau1} uses Cluster states \cite{BrRa01}, for
which $U^\Psi_{ph}$ is a product of two--qubit $\pi$--phase gates
only. Due to the structure of these LMEs, it was possible to show
that any unitary operator (and therefore quantum computing) can be
implemented in this way.

Let us now briefly discuss how LMEs can be generated. We write any
LME state $\ket{\Psi}$ as \bea \ket{\Psi}=U_{1,\ldots, n} \prod
U_{i_{k_1},\ldots, i_{k_{n-1}}}\cdots \prod U_i\ket{+}^n, \eea
where all the operators are phase gates acting on up to $n$ qubits.
For instance, $U_{123}$ maps $\ket{111}_{123}$ to
$e^{i\phi_{123}}\ket{111}_{123}$, with $\phi_{123}\in \R$ and
leaves the rest unchanged. It is straightforward to see that in
this way the $2^n$ phases $\alpha_{{\bf i}}$ can be generated.
Thus, any LME state can be prepared using generalized phase gates,
which could result from a generalized Ising interaction. If
$\alpha({\bf i})$ is a polynomial of degree $k$ (as a function of
${\bf i}=(i_1,\ldots, i_n)$) then the corresponding state can be
prepared using only $k$--body interactions. E.g. graph states or
weighted graph state, where the phases $\alpha_{{\bf i}}$ are
polynomials of degree $2$ can be created using only two--qubit
phase gates. This shows that the correlations in the coefficients
are directly related to a preparation scheme and therefore to the
entanglement contained in the state.

In order to discuss different methods for the preparation of any
LME state $\ket{\Psi}$, we construct a complete set of commuting
unitary and hermitian operators, $\{W_1,\ldots, W_n\}$ which
uniquely define $\ket{\Psi}$ (generalized stabilizers
\cite{Gothesis97}). We define $W_k= U_\Psi Z_k
U_\Psi^\dagger=U^\Psi_{ph} X_k (U^\Psi_{ph})^\dagger$. Then
$W_k\ket{\Phi}=\ket{\Phi}$ $\forall k$ iff $\ket{\Phi}=\ket{\Psi}$
\footnote{Using the Theorem, it is straight forward to show that
$W_k= \sum_{i_1,\ldots, i_n} e^{i \beta_{i_1,\ldots,
i_{k-1},i_{k+1},\ldots, i_n}}\times \proj{i_1}\otimes \ldots
\ket{0}_k\bra{1}\otimes \ldots \proj{i_n}+h.c,$ with
$\beta_{i_1,\ldots, i_{k-1},i_{k+1},\ldots,
i_n}=\alpha_{i_1,\ldots, i_k=0,\ldots i_n}-\alpha_{i_1,\ldots,
i_k=1,\ldots i_n}$. These operators correspond to local operators
iff for all $k$ the phases $e^{i\beta_{i_1,\ldots,
i_{k-1},i_{k+1},\ldots, i_n}}$ can be factorized, i.e. $
e^{i\beta_{i_1,\ldots, i_{k-1},i_{k+1},\ldots, i_n}}=e^{i
f_1(i_1)}\cdots e^{i f_n(i_n)},$ for some functions $f_i$. This is
for instance the case for graph states, where all the operators
$W_k$ are of the form $V_1\otimes \cdots X_k\otimes \cdots \otimes V_n$,
where each $V_i\in\{\one, Z\}$.}. Note that all these unitary
observables have as a common eigenbasis the basis
$\{\ket{\Psi_{{\bf i}}}\}$ and that $W_k^2=\one$. Let ${\cal W}$
denote the group generated by $\{W_1,\ldots, W_n\}$. Then we have,
similarly to the stabilizer states, $ \sum_{W\in {\cal W}}
W=\proj{\Psi}$ \footnote{ This can be easily seen by noting that
$\sum_{W\in {\cal W}}W=U_\Psi \sum_{\bf{i}} Z^{i_1}\otimes \cdots
\otimes Z^{i_n} U_\Psi^\dagger$ and
$\bra{k_1,\ldots,k_n}\sum_{\bf{i}} Z^{i_1}\otimes \cdots \otimes
Z^{i_n}
\ket{k_1,\ldots,k_n}=\Pi_{l\in\{1,\ldots,n\}}((1)+(-1)^{k_l})=\Pi_{l\in\{1,\ldots,n\}}\delta_{k_l,0}$.}.
Depending only on the phases $\alpha_{\bf i}$, which define the
LME, $\ket{\Psi}$, the generalized stabilizer operators can be
quasi--local, i.e. act non trivially on a small set of
(neighboring) qubits \cite{KrKr08b}. In this case, the methods
developed in \cite{VeWoCi08,KrBu08} can be employed to derive a
quasi--local dissipative process for which the unique stationary
state is $\ket{\Psi}$. Apart form that, one can also easily
construct Hamiltonians for which the unique ground--state is
$\ket{\Psi}$, e.g. $H=\one-\sum_{W\in {\cal W}} W$.

As an example of a state which is not LME we consider the three
qubit $W$--state, $
\ket{W}=\frac{1}{\sqrt{3}}\left(\ket{001}+\ket{010}+\ket{100}\right)$.
Due to the fact that $\ket{W}$ is already in its standard
decomposition, the unitary operations we have to consider are of
the form $U_i=e^{i\alpha_i Z_i} X_i e^{-i\alpha_i Z_i}$. Since
$\langle U_i\otimes U_j\rangle\propto \cos(\alpha_i-\alpha_j)$ it
is impossible that all these expectation values vanish for any pair
of unitary operations. As a consequence, it is only possible to
choose $U_1,U_2$ such that the set $\{\ket{W},U_1\otimes
\one\ket{W}, \one \otimes U_2\ket{W}, U_1\otimes U_2\ket{W}\}$ is
orthogonal, for instance with $U_1=X, U_2=Y$. This means, that it
is impossible for the third party to entangle an auxiliary system
such that the $3$ system qubits are maximally entangled to the $3$
auxiliary qubits. One can also show that if two parties maximally
entangled their system qubit with a local auxiliary qubit then the
third party cannot adequately entangle his auxiliary qubit to his
system qubit, even if he would apply a general two-qubit gate
\footnote{Here, one needs to consider instead of the local unitary
operators local POVMs ($\{E_k^{i_k}\}$ with $(E_j^{0})^\dagger
E_j^{0}+(E_j^{1})^\dagger E_j^{1}=\one$). The state would be
locally entanglable iff $\langle (E_1^{i_1})^\dagger
E_1^{j_1}\otimes \ldots \otimes (E_1^{i_n})^\dagger
E_1^{j_n}\rangle=\delta_{i_1,j_1}\ldots \delta_{i_n,j_n}1/2^n$
\cite{KrKr08b}.}. Thus, two of the three parties can lock some
information in the state by entangling their system to auxiliary
systems.


Investigating the entanglement properties of LMEs might lead to an
insight to the entanglement properties of arbitrary many--body
states, since the class we consider is very large ($2^n$ real
parameters). Due to the simplicity of the form of the states and
the underlying physical picture it might be possible to define new
operational entanglement measures. It might also be feasible to
define the MREGS, i.e. the minimal set of reversible entangled
states, for LMEs \cite{PoBe,DuCi03}. Furthermore, this notion can
also be used to study the separability problem \cite{KrKr08b}. We
are planning to generalize the known quantum informational tasks,
which use LMEs, like quantum computing, and quantum communication
tasks \cite{Kempe} employing more general LMEs than stabilizer
states and weighted graph states. This might allow us to find new
applications of multipartite systems and therefore new operational
entanglement measures. Apart from that, considering a restricted
set of LMEs, where for instance only certain three qubit phases
gates are required to generate the states, might allow us to
generalize the well--known Gottesman--Knill theorem
\cite{NiCh00}. Identifying a large enough subset of these
states might also be relevant for the simulation of quantum systems
\cite{VePoCi04s, AnPl06}. Furthermore, the states which are not LME
might be used for locking information and avoiding certain errors.

B. K. would like to thank J. I. Cirac for interesting discussions.
We acknowledge support of the FWF (Elise Richter Program) and the
European Union (OLAQUI, SCALA, QICS).


\begin{thebibliography}{99}

\bibitem{CoKu00} V. Coffman, J. Kundu, and W.K. Wootters, Phys. Rev. A {\bf 61}, 052306
(2000).

\bibitem{VePoCi04} F. Verstraete, M. Popp, and J.I Cirac,  Phys. Rev. Lett. {\bf 92}, 027901 (2004).

\bibitem{MiVe04} see for instance A. Miyake and F. Verstraete, Phys. Rev. A {\bf 69},
012101 (2004); A. Osterloh, J. Siewert, Phys. Rev. A {\bf 72}, 012337 (2005).

\bibitem{DuViCi00} W. D\"ur, G. Vidal, and J. I. Cirac, Phys. Rev. A {\bf 62}, 062314 (2000).

\bibitem{VeDeMoVe02} F. Verstraete, J. Dehaene, B. De Moor, and H. Verschelde, Phys. Rev. A {\bf 65}, 052112 (2002).

\bibitem{Ves} F. Verstraete, J. Dehaene, and B. De Moor, Phys. Rev. A {\bf 68}, 012103 (2003).


\bibitem{HoHo07} R. Horodecki, P. Horodecki, M. Horodecki, and K. Horodecki,  quant-ph/0702225.

\bibitem{AmFaOsVe08} L. Amico, R. Fazio, A. Osterloh, and V. Vedral, Rev. Mod.
Phys. {\bf 80}, 517 (2008).

\bibitem{WoCi} D. Perez-Garcia, F. Verstraete, M.M. Wolf,and J. I. Cirac,  Quantum Inf. Comput. {\bf 7}, 401 (2007).

\bibitem{Ni01} M. A. Nielsen, Phys. Lett. A {\bf 308}, 96 (2003).

\bibitem{Gothesis97} see for instance, D. Gottesman, Ph.D. Thesis,
quant-ph/9705052.

\bibitem{RaBr01} R. Raussendorf and H. J. Briegel,  Phys. Rev. Lett. {\bf 86}, 5188 (2001).

\bibitem{Ben}
C. H. Bennett, G. Brassard, S. Popescu, B. Schumacher, J. A. Smolin, and W. K. Wootters,  Phys. Rev. Lett. {\bf 76}, 722 (1996).

\bibitem{DuAsBr03} W. D\"ur, H. Aschauer, and H. J. Briegel,  Phys. Rev. Lett. {\bf 91}, 107903 (2003).

\bibitem{KrMiBrDu06} C. Kruszynska, A. Miyake, H. J. Briegel, and W. D\"ur,  Phys. Rev. A {\bf 74}, 052316 (2006).

\bibitem{CiDu01} J. I. Cirac, W. D\"ur, B. Kraus, and M.  Lewenstein,  Phys. Rev. Lett. {\bf 86}, 544 (2001).

\bibitem{DuCi01} W. D\"ur, and J. I. Cirac,  Phys. Rev. A {\bf 64}, 012317 (2001).

\bibitem{VeWoCi08} F. Verstraete, M. M. Wolf, and J. I. Cirac,  quant-ph/08031447.

\bibitem{KrBu08} B. Kraus, H. P. B\"uchler, S. Diehl, A. Kantian, A. Micheli, and P. Zoller,  quant-ph/08031463.

\bibitem{KrKr08b} C. Kruszynska, M. Grassl, and B. Kraus, in preparation.

\bibitem{NiCh00} M. A. Nielsen, and I. L. Chuang, \textit{Quantum Computation and Quantum Information} (Cambridge University Press, Cambridge, England, 2000).


\bibitem{Hein} M. Hein, W. D\"ur, J. Eisert, R. Raussendorf, M. Van den Nest,
and H.-J. Briegel, International School of Physics Enrico Fermi
(Varenna, Italy), Quantum computers, algorithms and chaos {\bf 162}
(Eds.: P. Zoller, G. Casati, D. Shepelyansky, G. Benenti) (2006).

\bibitem{AkBr05} A. Miyake and H. J. Briegel,  Phys. Rev. Lett. {\bf 95}, 220501 (2005).

\bibitem{Rau1} R. Raussendorf, and H. J.  Briegel,  Phys. Rev. Lett. {\bf 86}, 5188 (2001).

\bibitem{BrRa01} H. J. Briegel, R. Raussendorf, Phys. Rev. Lett. {\bf 86}, 910 (2001).

\bibitem{PoBe} C. H. Bennett, S. Popescu, S. Rohrlich, D. Smolin, A. John, and A. V. Thapliyal, Phys. Rev. A {\bf 63}, 012307 (2000).

\bibitem{DuCi03} A. Acin, G. Vidal, and J. I. Cirac, Quantum Inf. Comp. {\bf 3}, 55 (2003).

\bibitem{Kempe} see for instance J. Kempe, Phys. Rev. A {\bf 60}, 910 (1999).




\bibitem{VePoCi04s} F. Verstraete, D. Porras, and J. I. Cirac, Phys. Rev. Lett. {\bf 93}, 227205 (2004).

\bibitem{AnPl06} S. Anders, M. B. Plenio, W. Dur, F. Verstraete, and H. J. Briegel, Phys. Rev. Lett. {\bf 97}, 107206 (2006).


\end{thebibliography}
\end{document}